\newtheorem{lemma}{Lemma}
\newenvironment{proof}[1][Proof]{\textbf{#1.} }{\ \rule{0.5em}{0.5em}}
\newcommand{\be}{\begin{eqnarray}}
\newcommand{\ee}{\end{eqnarray}}
\def\({\left(}
\def\){\right)}
\def\[{\left[}
\def\]{\right]}
\def\C{\mathbb{C}}
\newcommand{\braket}[1]{\left\langle #1 \right\rangle}
\newcommand{\bra}[1]{\langle #1 |}
\newcommand{\ket}[1]{| #1 \rangle}
\newcommand{\Tr}{\mathrm{Tr}}
\newcommand{\sla}[1]{\rlap{\kern .15em /}#1}
\newcommand{\ot}{\otimes}
\newcommand{\Bot}{\bigotimes}
\begin{document}
\title{Mixed State Entanglement Measures for Intermediate Separability}

\author{Tsubasa Ichikawa}
\affiliation{ Research Center for Quantum Computing, Interdisciplinary Graduate School of Science and Engineering,
 Kinki University, 3-4-1 Kowakae,
Higashi-Osaka, Osaka 577-8502, Japan}

\author{Marcus Huber}
\affiliation{Faculty of Physics, University of Vienna, A-1090 Vienna, Austria}

\author{Philipp Krammer}
\affiliation{Faculty of Physics, University of Vienna, A-1090 Vienna, Austria}

\author{Beatrix C. Hiesmayr}
\affiliation{Faculty of Physics, University of Vienna, A-1090 Vienna, Austria}

\begin{abstract}
To determine whether a given multipartite quantum state is separable with respect to some partition we construct a family of entanglement measures $\{R_m(\rho)\}$. This is done utilizing generalized concurrences as building blocks which are defined by flipping of $M$ constituents and indicate states that are separable with regard to bipartitions when vanishing. Further, we provide an analytically computable lower bound for  $\{R_m(\rho)\}$ via a simple ordering relation of the convex roof extension. Using the derived lower bound, we illustrate the effect of the isotropic noise on a family of four-qubit mixed states for each intermediate separability.
\end{abstract}
\pacs{03.65.-w, 03.67.-a, 03.67.Mn.}

\maketitle

\section{Introduction}

Quantum entanglement plays a crucial role in foundations of quantum physics and is an indispensable ingredient for quantum information processing tasks \cite{nielsen00}. In recent years it has become important to quantify the entanglement of quantum states, since not all entangled states are equally useful for quantum protocols. In particular, for multipartite systems it is of high interest to detect and quantify the entanglement not only of the whole system, but also between various constituting subsystems.

There are different approaches to multipartite entanglement quantification. A proposed measure, introduced in Ref.~\cite{meyer02} by Meyer and Wallach, quantifies the global entanglement of the multipartite system, and vanishes for fully separable states only. Another approach, introduced by Love et. al. in Ref.~\cite{love07}, quantifies the amount of genuine multipartite entanglement and vanishes for any partially separable states. 
A different approach defines families of entanglement measures that quantifies the amount of entanglement also for intermediate or partial separability, as proposed in Refs.~\cite{hiesmayr08, hiesmayr09, ichikawa09}. In Ref.~\cite{ichikawa09}, a family of entanglement measures for intermediate separability of pure states of $n$ qubits, called $R_m$ measures, has been introduced. This family includes the Meyer-Wallach measure and the Love measure as elements of the family. It manifests its usefulness by exhibiting a clear difference between the well-known multipartite GHZ and W states for systems of up to fifty qubits. To generalize the entanglement measures to mixed states, one uses the well-established convex roof extension. The measure is then defined as the weighted measure for pure states of the mixed states' decomposition, where one has to take the infimum over all possible decompositions. However, it is in general hard to calculate the convex roof, since mixed states allow infinitely many decompositions into pure states.

The aim of this paper is to derive lower bounds for the $R_m$ measures for mixed states of $n$ qudits. Lower bounds guarantee at least a certain value of entanglement that has to be present in the system. To do so, we utilize a method introduced in Ref.~\cite{hiesmayr08} and similarly in Ref.~\cite{mintert05b}, and applied in Refs.~\cite{hiesmayr08a, hiesmayr09}. This method decomposes entanglement measures into sums of generalized concurrences (so-called $M$-concurrences), where for each $M$-concurrence a lower bound can be easily computed, and thus it can also be achieved for the entanglement measure. In the following we give the necessary definitions, show how to decompose the $R_m$ measures into $M$-concurrences, and thus are able to derive a formula for a lower bound of the $R_m$ measures for mixed states. We illustrate these results by instructive examples of four-partite states.


\section{Measures and their Lower Bounds}

We consider an $n$-qudit system ${\cal H}=\Bot_{i=1}^n{\cal H}_i$ with constituent systems ${\cal H}_i=\C^d$ for all $i$. To specify how to focus on the total system, let us introduce the partition set $\Gamma:=\{\gamma_j\}_{j=1}^m$, whose elements satisfy
\be
\bigcup_{j=1}^m\gamma_j={\cal N},
\quad
{\rm and}
\quad
\gamma_j\cap\gamma_k=\emptyset
\quad
{\rm for}
\quad
j\neq k,
\ee
where ${\cal N}:=\{1,2,\cdots,n\}$ is the set of the labels of the constituents, and $m$ is the total number of subsystems. We denote the complement of $\gamma_j$ with regard to ${\cal N}$ by $\bar{\gamma}_j$ and the number of the elements of the (sub)set $\gamma$ by $|\gamma|$ (see FIG. 1).

\begin{figure}[t]
\begin{center}
\includegraphics[width=2in]{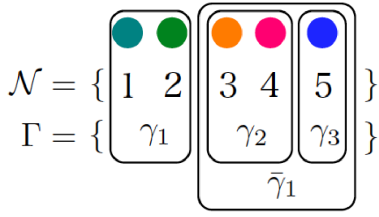}
\caption{Schematic diagram for the partition with $n=|{\cal N}|=5, m=3$, $|\gamma_1|=|\gamma_2|=2$ and $|\gamma_3|=1$.}
\label{ }
\end{center}
\end{figure}

Let us start by defining generalized concurrences for multipartite states (for bipartite pure qudit states, related generalized concurrences were introduced in \cite{audenaert01, rungta01}). For that purpose, let us introduce flip operators $\sigma_{kl}^i$ with $k,l=0,1,\cdots,d-1$ that act on the $i$-th qudit and are defined as
\be
    &\sigma_{kl}^i := \ket{k} \bra{l} + \ket{l} \bra{k} \,.
\ee
Flip operators $f_\delta(\{k_i,l_i\}_{i\in\delta})$ with respect to the set of subsystems $\delta\subseteq{\cal N}$ are given by
\be
f_\delta(\{k_i,l_i\}_{i\in\delta}):=\Bot_{i\in\delta}\sigma^i_{k_il_i}\ot\Bot_{j\in\bar{\delta}}\mathbbm{1}^j,
\ee
where $\mathbbm{1}^i$ is the identity operator on ${\cal H}_i$. Since it is evident that the indices in the argument of $f_\delta$ run only within the elements of $\delta$, hereafter we use the abbreviated notation $f_\delta(\{k_i,l_i\})$ for the flip operators. Using the notation $\ket{ \{ j \} }:= \ket{j_1} \otimes \ket{j_2} \otimes \ldots \otimes \ket{j_n}$ with $\braket{k|l}=\delta_{kl}$ and $k,l=0,1,\cdots,d-1$ for states of the computational basis, we construct an operator
\be
\begin{split}
O_{\gamma,\delta}(&\{k_i,l_i\};\{j\}):=f_\delta(\{k_i,l_i\})\ket{\{j\}}\bra{\{j\}}\\
&-f_{\gamma}(\{k_i,l_i\})\ket{\{j\}}\bra{\{j\}}f_{\bar{\gamma}\cap\delta}(\{k_i,l_i\})
\end{split}
\ee
by the help of two subsets $\gamma, \delta$ of ${\cal N}$ satisfying $\delta \subseteq {\cal N}$ and $\delta \cap \gamma = \emptyset$. In the following we use the abbreviated notation $O_{\gamma,\delta}$ if there is no possibility to cause confusions. The generalized (squared) concurrences (also called $M$-concurrences) of pure states $\ket{\psi}\in{\cal H}$ for the two subsets $\gamma, \delta$ are defined as
\be
\begin{split}
C_{\gamma,\delta}^2(\psi)
:=\sum_{i\in\delta}\sum_{k_i<l_i}\sum_{\ket{ \{ j \} }} &|\bra{\psi}O_{\gamma,\delta}\ket{\psi^*}|^2,
\end{split}
\label{defCpure}
\ee
which vanishes if and only if the state $\ket{\psi}$ is separable with respect to the bipartition $\{\gamma,\bar{\gamma}\}$. Here, $\ket{\psi^*}$ denotes the complex conjugated state to $\ket{\psi}$.
Note that the Hill-Wootters concurrence \cite{hill97} for pure states is reproduced for $n=2$ and $d=2$ (two qubits).

We define the generalized (squared) concurrences for mixed states via the convex roof,
\be
C^2_{\gamma,\delta}(\rho) := \inf_{\{p_\alpha,\psi_\alpha\}}\sum_\alpha p_\alpha C^2_{\gamma,\delta}(\psi_\alpha) \,,
\label{defCmix}
\ee
where the infimum is taken over all possible decompositions of the given density matrix $\rho=\sum_\alpha p_\alpha\ket{\psi_\alpha}\bra{\psi_\alpha}$ into a probability distribution $\{p_\alpha\}$ and pure states $\ket{\psi_\alpha}$. Although it is in general hard to evaluate the convex roof extension, we can explicitly determine a lower bound for the generalized concurrences by
\be
C^2_{\gamma,\delta}(\rho) \geq \Lambda^2_{\gamma,\delta}(\rho)
\label{Clb}
\ee
with
\be
\Lambda_{\gamma,\delta}(\rho):=\max\left\{0,\sum_{O_{\gamma,\delta}}\(2\lambda(O_{\gamma,\delta})-\Tr\sqrt{\rho\tilde{\rho}(O_{\gamma,\delta})}\)\right\},
\ee
where the summation is taken under the same condition as in Eq.~(\ref{defCpure}) and $\lambda(O_{\gamma,\delta})$ is the largest eigenvalue of $\sqrt{\rho\tilde{\rho}(O_{\gamma,\delta})}$ with
\be
\tilde{\rho}(O_{\gamma,\delta}):=(O_{\gamma,\delta}+O_{\gamma,\delta}^\dag)\rho^*(O_{\gamma,\delta}+O_{\gamma,\delta}^\dag).
\ee
This lower bound will be helpful in the following derivation of the lower bound for the $R_m$ measures.

Summing up all generalized concurrences for subsets $\delta$, we can define an entanglement measure for the set $\gamma$ of constituents,
\be 
\label{etagamma}
\eta_{\gamma}(\psi):=\sum_{\delta\subseteq{\cal N}} C_{\gamma,\delta}^2(\psi),
\label{defetapure}
\ee
where the sum is restricted by $\delta \cap \gamma = \emptyset$. The relation to the linear entropy of the reduced density matrix $\rho_{\gamma}:=\Tr_{\bar{\gamma}}\ket{\psi}\bra{\psi}$ of the set $\gamma$ of the constituents has been established in \cite{hiesmayr08, hiesmayr09}:
\be
\eta_{\gamma}(\psi)=N(|\gamma|)\(1-\Tr\rho_{\gamma}^2\),
\ee
where  $N(|\gamma|):=d^{|\gamma|}/(d^{|\gamma|}-1)$ is a normalization factor in order to obtain $\eta_{\gamma} (\psi) = 1$ if $\rho_{\gamma}$ is the maximally mixed state.

Let us generalize the measure $\eta_{\gamma}(\psi)$ \eqref{etagamma} to measures for particular partitions of the $n$-qudit system. To do so, we rewrite $\gamma\subseteq{\cal N}$ as $\gamma_i$, such that it can be regarded as an element of general partition $\Gamma=\{\gamma_i\}_{i=1}^m$. Taking the arithmetic average of $\eta_{\gamma_i}(\psi)$ for all $i$, we define the following entanglement measures with regard to a particular partition ${\Gamma}$:
\be
\xi_\Gamma(\psi):=\frac{1}{m}\sum_{i=1}^m\eta_{\gamma_i}(\psi).
\label{defxipure}
\ee
Furthermore, taking the geometric average of $\xi_{\Gamma}$ for all possible partitions under the condition that the number of the elements of the partitions $m$ is fixed, we obtain the family of entanglement measures $R_m(\psi)$ for intermediate separability,
\be
R_m(\psi):=\(\prod_{|\Gamma|=m}\xi_\Gamma(\psi)\)^{1/S(n,m)},
\ee
where
\be
S(n,m):=\sum_{k=1}^m\frac{(-)^{m-k}k^{n-1}}{(k-1)!(m-k)!}
\ee
is the Stirling number in the second kind, representing the number of subsystem combinations that result in $m$ partitions.

In order to generalize $R_m(\psi)$ to mixed states, we take the convex roof of $\xi_\gamma(\psi)$,
\be
\xi_\Gamma(\rho):=\inf_{\{p_\alpha,\psi_\alpha\}}\sum_\alpha p_\alpha\xi_\Gamma(\psi_\alpha),
\label{defximix}
\ee
and define $R_m(\rho)$ for mixed states as a quantity obtained by taking the geometric average of $\xi_\Gamma(\rho)$ for all possible partitions with the fixed number of subsystems.

For that purpose, we have to prove the following lemma:
\begin{lemma} \label{thmroof}
Suppose that there exist pure state entanglement measures $\mu_s(\psi)$, labeled by the index $s$. Then, the convex roof of the sum of them is no less than the sum of the convex roofs of each measure, i.e.
\be
\mu(\rho)\ge\sum_s\mu_s(\rho)
\ee
where
\be
\mu(\rho):=\inf_{\{p_\alpha,\psi_\alpha\}}\sum_\alpha p_\alpha\sum_s\mu_s(\psi_\alpha)
\label{defmu}
\ee
and
\be
\mu_s(\rho):=\inf_{\{p_\alpha,\psi_\alpha\}}\sum_\alpha p_\alpha\mu_s(\psi_\alpha).
\ee
\end{lemma}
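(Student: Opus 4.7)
The plan is to exploit the linearity of the finite sum $\sum_s$ with respect to the expectation over a decomposition, and then to use the defining property of the infimum for each individual convex roof $\mu_s(\rho)$. The point is that in the definition of $\mu(\rho)$ one must use a single decomposition $\{p_\alpha,\psi_\alpha\}$ simultaneously for every $s$, whereas in the definition of $\mu_s(\rho)$ one is free to choose a different, optimal decomposition for each $s$. This asymmetry is exactly what produces the inequality.

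Concretely, I would fix an arbitrary pure-state decomposition $\rho=\sum_\alpha p_\alpha\ket{\psi_\alpha}\bra{\psi_\alpha}$ and interchange the finite sums:
\be
\sum_\alpha p_\alpha\sum_s\mu_s(\psi_\alpha)=\sum_s\(\sum_\alpha p_\alpha\mu_s(\psi_\alpha)\).
\ee
For each fixed index $s$, the specific decomposition $\{p_\alpha,\psi_\alpha\}$ is just one admissible decomposition of $\rho$ in the variational problem defining $\mu_s(\rho)$, so
\be
\sum_\alpha p_\alpha\mu_s(\psi_\alpha)\ge\mu_s(\rho).
\ee
Summing this bound over $s$ yields $\sum_\alpha p_\alpha\sum_s\mu_s(\psi_\alpha)\ge\sum_s\mu_s(\rho)$ for every decomposition of $\rho$. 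Taking the infimum over $\{p_\alpha,\psi_\alpha\}$ on the left-hand side—which by definition \eqref{defmu} equals $\mu(\rho)$—preserves the inequality because the right-hand side does not depend on the chosen decomposition.

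There is essentially no obstacle here beyond bookkeeping: the nontrivial content is conceptual rather than technical, namely recognizing that convex roofs are superadditive in this sense precisely because a common optimal decomposition for all summands need not exist, so the constrained infimum $\mu(\rho)$ is generally larger than the sum of the unconstrained infima $\mu_s(\rho)$. If the $\mu_s$ are thought of as $C^2_{\gamma,\delta}$-type quantities that decompose $\xi_\Gamma$, this lemma is exactly what allows lower bounds on the individual $M$-concurrences, obtained from \eqref{Clb}, to be added together to produce a lower bound on $\xi_\Gamma(\rho)$, and subsequently on $R_m(\rho)$.
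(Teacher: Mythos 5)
Your proof is correct and follows essentially the same route as the paper: both hinge on the observation that the decomposition in $\mu(\rho)$ is shared across all $s$, while each $\mu_s(\rho)$ may optimize its own decomposition. If anything, your version is slightly cleaner, since by bounding every admissible decomposition before taking the infimum you avoid the paper's implicit assumption that an optimal decomposition attaining $\mu(\rho)$ actually exists.
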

\begin{proof}
Suppose that the decomposition of the given mixed state $\rho$ which yields $\mu(\rho)$ is given by $\rho=\sum_\alpha p_\alpha^\prime\ket{\psi_\alpha^\prime}\bra{\psi_\alpha^\prime}$. Then, starting from Eq.~(\ref{defmu}), we get
\be
\begin{split}
\mu(\rho)
=\sum_{s,\alpha}p_\alpha^\prime\mu_s(\psi_\alpha^\prime)
=\sum_s\sum_\alpha p_\alpha^\prime\mu(\psi_\alpha^\prime)
\ge\sum_{s}\mu_s(\rho)
\end{split}
\ee
where we have to use ``$\geq$'' since the decomposition $\{p_\alpha^\prime, \ket{\psi_\alpha^\prime}\}$ does not necessarily yield the infimum of $\sum_\alpha p_\alpha\mu_s(\psi_\alpha)$ for all $s$.
\end{proof}

Applying Lemma~\ref{thmroof} to Eq.~(\ref{defximix}), we  obtain a lower bound of $\xi_\Gamma(\rho)$:
\be
\xi_\Gamma(\rho)\ge\frac{1}{m}\sum_{i=1}^m\eta_{\gamma_i}(\rho),
\ee
where
\be
\eta_{\gamma_i}(\rho):=\inf_{\{p_{\alpha},\psi_{\alpha}\}}\sum_{\alpha} p_{\alpha} \eta_{\gamma_i}(\psi_{\alpha}).
\label{defetamix}
\ee
Furthermore, utilizing (\ref{defetapure}), Lemma~\ref{thmroof}, (\ref{defCmix}), and Ineq.~(\ref{Clb}) for Eq.~(\ref{defetamix}), we find
\be
\begin{split}
\eta_{\gamma_i}(\rho)&=\inf_{\{p_{\alpha},\psi_{\alpha}\}}\sum_{\alpha} p_{\alpha} \eta_{\gamma_i}(\psi_{\alpha})\\
&=\inf_{\{p_{\alpha},\psi_{\alpha}\}}\sum_{\alpha} p_{\alpha}\sum_{\delta}C_{\gamma_i,\delta}^2(\psi_\alpha)\\
&\ge\sum_{\delta}C_{\gamma_i,\delta}^2(\rho)
\geq\sum_{\delta}\Lambda_{\gamma_i,\delta}^2(\rho).
\end{split}
\ee
Thus, a computable lower bound of $R_m(\rho)$ is given by
\be
R_m(\rho)\geq \tilde{R}_m(\rho) \,,
\ee
where
\be
\tilde{R}_m(\rho):=\frac{1}{m}\(\prod_{|\Gamma|=m}\sum_{i=1}^m\sum_{\delta\subseteq{\cal N}} \Lambda_{\gamma_i,\delta}^2(\rho)\)^{1/S(n,m)},
\ee
and the second sum is again conditioned by $\delta \cap \gamma_i = \emptyset$ for each $i$.

\section{Example}

\begin{table}[t]
  \begin{center}
  \begin{tabular}{cc}
\hline
    Representative partition & Equivalent partitions\\
\hline
  $\{\{1\},\{2,3,4\}\}$   & $\{\{2\},\{1,3,4\}\}$ \\
  $\{\{3\},\{1,2,4\}\}$   & $\{\{4\},\{1,2,3\}\}$ \\
  $\{\{1,3\},\{2,4\}\}$   & $\{\{1,4\},\{2,3\}\}$ \\
  $\{\{1,2\},\{3,4\}\}$   &  \\
\hline
\end{tabular}
\end{center}
  \centering
  \begin{tabular}{cc}
\hline
    Representative partition & Equivalent partitions\\
\hline
  $\{\{1\},\{2\},\{3,4\}\}$   & \\
  $\{\{3\},\{4\},\{1,2\}\}$   & \\
  $\{\{1\},\{3\},\{2,4\}\}$   & $\{\{1\},\{4\},\{2,3\}\}$ \\
  &$\{\{2\},\{3\},\{1,4\}\}$\\
  &$\{\{2\},\{4\},\{1,3\}\}$\\
\hline
\end{tabular}
  \caption{Classification of the partitions of four-partite systems. The equivalent partitions can be mapped into the representative partition in the same line by the actions of the elements of $V$. (Above) The classification of the bipartitions. (Below) The classification of the tripartitions.}
  \label{ }
\end{table}

As an example of an explicit calculation of the lower bound formula, let us consider a family of four-qubit mixed states on $(\C^{2})^{\ot4}$
\be
\rho=p_1P^+_{12}\otimes P^+_{34}+p_2P^{\rm GHZ}+\frac{1-p_1-p_2}{16}\Bot_{i=1}^4\mathbbm{1}^i.
\label{BBO}
\ee
Here, $P^+_{ij}$ is that onto $\ket{\phi^+}_{ij}$, one of Bell bases spanning ${\cal H}_i\ot{\cal H}_j$, that is,
$
\ket{\phi^+}_{ij}:=\(\ket{00}_{ij}+\ket{11}_{ij}\)/\sqrt{2},
$
$P^{\rm GHZ}$ is the projector onto the GHZ state
$
\ket{\rm GHZ}:=\(\ket{0000}+\ket{1111}\)/\sqrt{2},
$
and
$
0\le p_1+p_2\le1,
$
with
$
p_1, p_2\ge0.
$
Note that the first and second terms in Eq. (\ref{BBO}) can be produced from the second order non-linear effect of a $\beta-{\rm BaB_3O_6}$ (BBO) crystal \cite{pan98, pan01}, respectively. Since the third term can be regarded as the isotropic noise, we expect that due to the quantification of entanglement of the state, we can see not only a variety of entanglement produced by the BBO crystal, but also how much the noise affects the entanglement in the system.
\begin{figure*}[t]
\begin{center}
\includegraphics[width=2in]{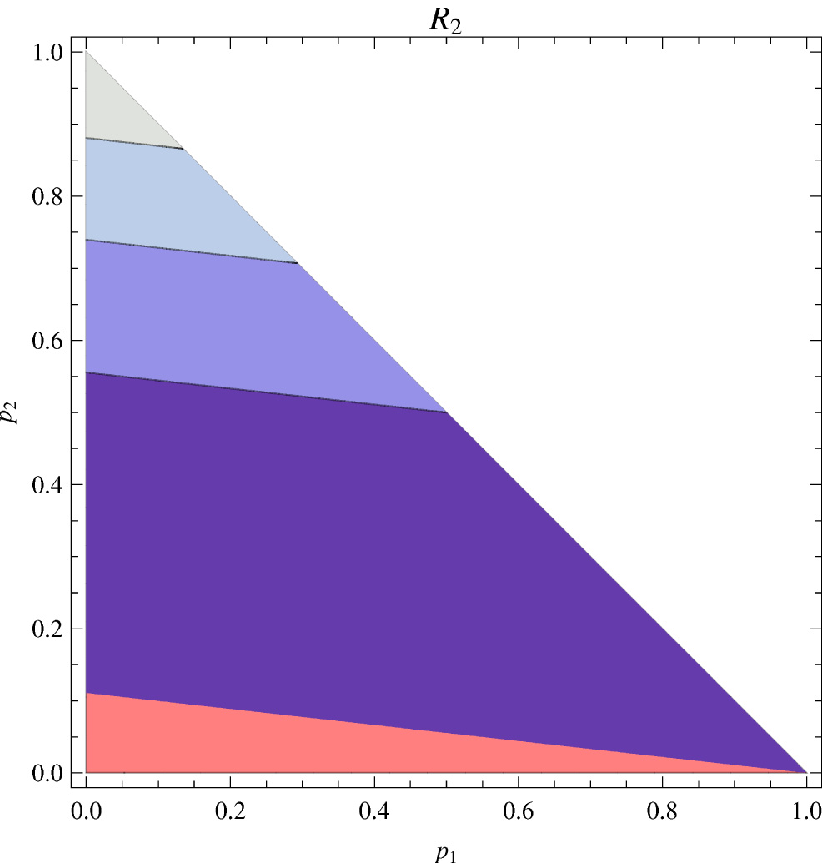}
\qquad
\includegraphics[width=2in]{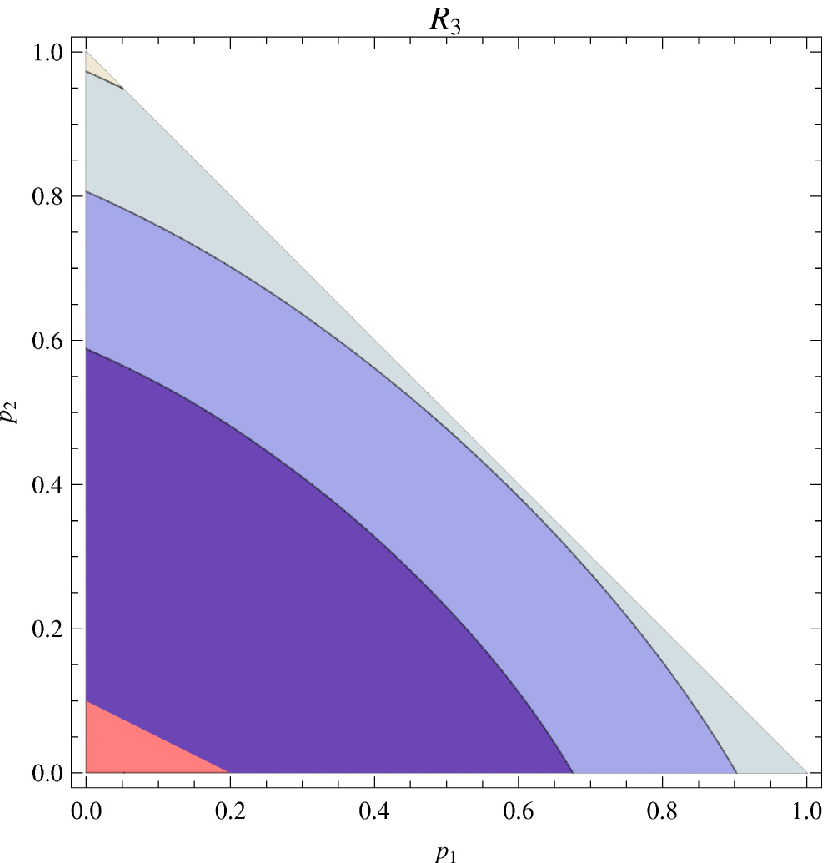}
\qquad
\includegraphics[width=2in]{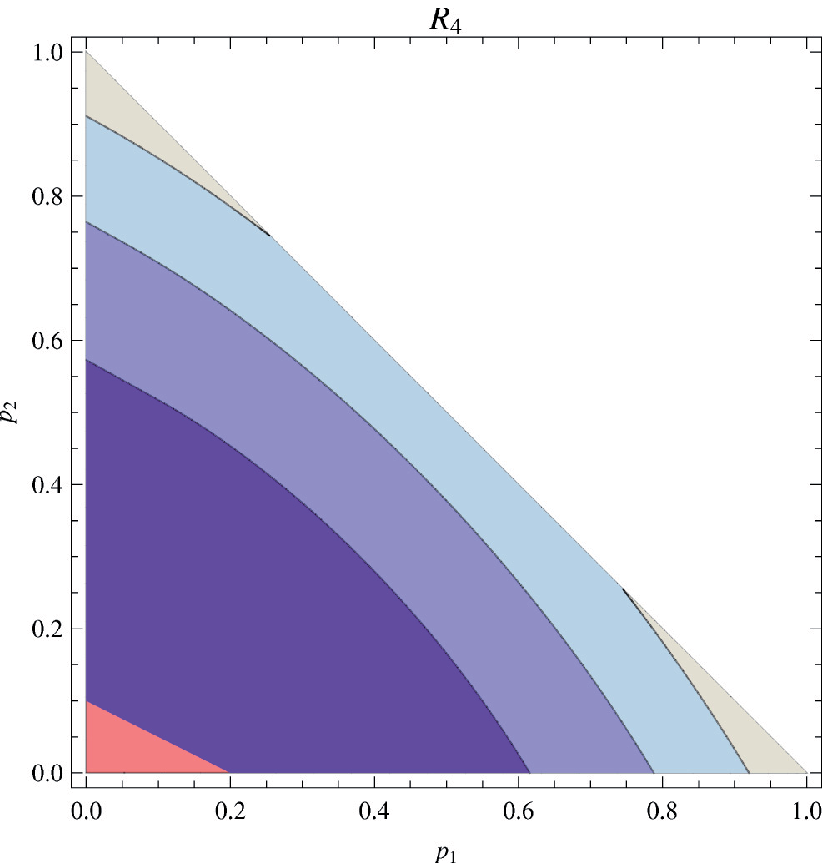}
\setlength{\unitlength}{1cm}%
\caption{Contour plots of the values of the lower bounds $\tilde{R}_m$ of $R_m$ measures for the state (\ref{BBO}). (Left) $R_2$ measure. (Center) $R_3$ measure. (Right) $R_4$ measure. Each colored area denotes the region where the lower bound of the measure has the specific range (red: $\tilde{R}_m=0$, dark purple: $0<\tilde{R}_m\le0.25$, bright purple: $0.25<\tilde{R}_m\le0.5$, blue: $0.5<\tilde{R}_m\le0.75$, ash: $0.75<\tilde{R}_m\le1$).}
\label{Rmvalues}
\end{center}
\end{figure*}

The state in Eq. (\ref{BBO}) clearly lacks the symmetry under actions of $S_4$ for the labels of the constituents, which is due to the first term in the summand. However, we easily see that it still holds a symmetry under actions of four elements of $S_4$,  i.e. the identity operation $e$, two transpositions $(1,2)$, $(3,4)$ and their consecutive operation $(1,2)(3,4)$. We can find that these four elements constitute a subgroup of $S_4$, which has been known as Vierergruppe $V$ \cite{armstrong88}. Hence, $\rho$ is invariant under the actions of $V$.
Such symmetry has a relevant role for the reduction of computational complexity. For example, the number of the bipartitions of four-partite systems, $S(4,2)=7$, effectively reduces to $4$. By the same way, that of tripartitions, $S(4,3)=6$, reduces to $3$ (see TABLE I).

The amount of entanglement in the state (\ref{BBO}) is visualized in FIG. \ref{Rmvalues}. Notice that the area with $R_m=0$ with larger $m$ is included in the same area with smaller $m$. This reflects the fact that the lower bound $\tilde{R}_m(\rho)$ captures the property that a $m$-separable state can be regarded as a $m^\prime$-separable state with $m\ge m^\prime$. To analyze these graphs in more detail, it is convenient to introduce two variables
\be
q:=1-p_1-p_2
\quad
{\rm and}
\quad
r:=\frac{p_2}{p_1}.
\ee
The former variable $q$ corresponds to the degree of the noise, while the latter $r$ characterizes the original noiseless state which has been altered into the state specified by the coordinates $(p_1,p_2)$ due to the presence of noise.

Keeping $q$ fixed and varying $r$, let us observe the variety of entanglement under the fixed noise situation. We can immediately see that the $R_2$ measure decreases monotonically as $r$ decreases, while the others behave differently. Since the smaller value of $r$ implies that the ratio of the bi-separable state $P_{12}^+\ot P_{34}^+$ in $\rho$ becomes larger, the preceding observation means that by the addition of the biseparable state, the state approaches the biseparable state monotonically, while the state does not approach the tri-separable or four-separable state. This comes from the fact that $P_{12}^+\ot P_{34}^+$ in $\rho$ is a genuinely biseparable state, and cannot be regarded as a tri-separable or four-separable state.
On the other hand, varying $q$ and fixing $r$, we see that all graphs share a common behavior: the monotonic approach to $R_m = 0$ for all $m$ by the addition of the noise. This is due to the fact that the noise $\(\Bot_i\mathbbm{1}^i\)/16$ can be interpreted as a separable state for any partition. From these observations, we may conclude that the lower bound $\tilde{R}_m(\rho)$ derived in this letter captures the natural behavior of the multipartite entanglement suitably.

\section{Summary}
In this letter, starting from the $m$-concurrences, we  systematically derived the computable lower bound of the family of the entanglement measures $\{R_m(\rho)\}_{m=2}^n$ by utilizing Lemma 1, which manifests the non-commutativity of the convex roof extention and summations of entanglement measures. As a testing ground of the derived lower bound, we examined the amount of the entanglement of the state (\ref{BBO}) and showed that the resultant graphs are explained by the natural behavior of the system in question. Thus, this example confirms the consistency of the lower bound and is useful for a finer analysis of entanglement.

\noindent Acknowledgements:
Tsubasa Ichikawa is supported by \lq Open Research Center\rq~Project for Private Universities: matching fund subsidy from MEXT, Japan. Marcus Huber acknowledges the Austrian Science Fund project FWF-P21947N16. Philipp Krammer acknowledges financial support by FWF project CoQuS No. W1210-N16 of the Austrian Science Fund. 

\bibliography{refsrmbound}

\begin{thebibliography}{14}
\expandafter\ifx\csname natexlab\endcsname\relax\def\natexlab#1{#1}\fi
\expandafter\ifx\csname bibnamefont\endcsname\relax
  \def\bibnamefont#1{#1}\fi
\expandafter\ifx\csname bibfnamefont\endcsname\relax
  \def\bibfnamefont#1{#1}\fi
\expandafter\ifx\csname citenamefont\endcsname\relax
  \def\citenamefont#1{#1}\fi
\expandafter\ifx\csname url\endcsname\relax
  \def\url#1{\texttt{#1}}\fi
\expandafter\ifx\csname urlprefix\endcsname\relax\def\urlprefix{URL }\fi
\providecommand{\bibinfo}[2]{#2}
\providecommand{\eprint}[2][]{\url{#2}}

\bibitem[{\citenamefont{Nielsen and Chuang}(2000)}]{nielsen00}
\bibinfo{author}{\bibfnamefont{M.}~\bibnamefont{Nielsen}} \bibnamefont{and}
  \bibinfo{author}{\bibfnamefont{I.}~\bibnamefont{Chuang}},
  \emph{\bibinfo{title}{Quantum Computation and Quantum Information}}
  (\bibinfo{publisher}{Cambridge University Press, Cambridge, England},
  \bibinfo{year}{2000}).

\bibitem[{\citenamefont{Meyer and Wallach}(2002)}]{meyer02}
\bibinfo{author}{\bibfnamefont{D.~A.} \bibnamefont{Meyer}} \bibnamefont{and}
  \bibinfo{author}{\bibfnamefont{N.~R.} \bibnamefont{Wallach}},
  \bibinfo{journal}{J. Math. Phys.} \textbf{\bibinfo{volume}{43}},
  \bibinfo{pages}{4273} (\bibinfo{year}{2002}).

\bibitem[{\citenamefont{Love et~al.}(2007)\citenamefont{Love, van~den Brink,
  Smirnov, Amin, Grajcar, Il'ichev, Izmalkov, and Zagoskin}}]{love07}
\bibinfo{author}{\bibfnamefont{P.~J.} \bibnamefont{Love}},
  \bibinfo{author}{\bibfnamefont{A.~M.} \bibnamefont{van~den Brink}},
  \bibinfo{author}{\bibfnamefont{A.~Y.} \bibnamefont{Smirnov}},
  \bibinfo{author}{\bibfnamefont{M.~H.~S.} \bibnamefont{Amin}},
  \bibinfo{author}{\bibfnamefont{M.}~\bibnamefont{Grajcar}},
  \bibinfo{author}{\bibfnamefont{E.}~\bibnamefont{Il'ichev}},
  \bibinfo{author}{\bibfnamefont{A.}~\bibnamefont{Izmalkov}}, \bibnamefont{and}
  \bibinfo{author}{\bibfnamefont{A.~M.} \bibnamefont{Zagoskin}},
  \bibinfo{journal}{Quant. Inf. Process.} \textbf{\bibinfo{volume}{6}},
  \bibinfo{pages}{187} (\bibinfo{year}{2007}).

\bibitem[{\citenamefont{Hiesmayr and Huber}(2008)}]{hiesmayr08}
\bibinfo{author}{\bibfnamefont{B.~C.} \bibnamefont{Hiesmayr}} \bibnamefont{and}
  \bibinfo{author}{\bibfnamefont{M.}~\bibnamefont{Huber}},
  \bibinfo{journal}{Phys. Rev. A} \textbf{\bibinfo{volume}{78}},
  \bibinfo{pages}{012342} (\bibinfo{year}{2008}).

\bibitem[{\citenamefont{Hiesmayr et~al.}(2009)\citenamefont{Hiesmayr, Huber,
  and Krammer}}]{hiesmayr09}
\bibinfo{author}{\bibfnamefont{B.~C.} \bibnamefont{Hiesmayr}},
  \bibinfo{author}{\bibfnamefont{M.}~\bibnamefont{Huber}}, \bibnamefont{and}
  \bibinfo{author}{\bibfnamefont{P.}~\bibnamefont{Krammer}},
  \bibinfo{journal}{Phys. Rev. A} \textbf{\bibinfo{volume}{79}},
  \bibinfo{pages}{062308} (\bibinfo{year}{2009}).

\bibitem[{\citenamefont{Ichikawa et~al.}(2009)\citenamefont{Ichikawa, Sasaki,
  and Tsutsui}}]{ichikawa09}
\bibinfo{author}{\bibfnamefont{T.}~\bibnamefont{Ichikawa}},
  \bibinfo{author}{\bibfnamefont{T.}~\bibnamefont{Sasaki}}, \bibnamefont{and}
  \bibinfo{author}{\bibfnamefont{I.}~\bibnamefont{Tsutsui}},
  \bibinfo{journal}{Phys. Rev. A} \textbf{\bibinfo{volume}{79}},
  \bibinfo{pages}{052307} (\bibinfo{year}{2009}).

\bibitem[{\citenamefont{Mintert et~al.}(2005)\citenamefont{Mintert, Kus, and
  Buchleitner}}]{mintert05b}
\bibinfo{author}{\bibfnamefont{F.}~\bibnamefont{Mintert}},
  \bibinfo{author}{\bibfnamefont{M.}~\bibnamefont{Kus}}, \bibnamefont{and}
  \bibinfo{author}{\bibfnamefont{A.}~\bibnamefont{Buchleitner}},
  \bibinfo{journal}{Phys. Rev. Lett.} \textbf{\bibinfo{volume}{95}},
  \bibinfo{pages}{260502} (\bibinfo{year}{2005}).

\bibitem[{\citenamefont{Hiesmayr et~al.}(2008)\citenamefont{Hiesmayr, Hipp,
  Huber, Krammer, and Spengler}}]{hiesmayr08a}
\bibinfo{author}{\bibfnamefont{B.~C.} \bibnamefont{Hiesmayr}},
  \bibinfo{author}{\bibfnamefont{F.}~\bibnamefont{Hipp}},
  \bibinfo{author}{\bibfnamefont{M.}~\bibnamefont{Huber}},
  \bibinfo{author}{\bibfnamefont{P.}~\bibnamefont{Krammer}}, \bibnamefont{and}
  \bibinfo{author}{\bibfnamefont{C.}~\bibnamefont{Spengler}},
  \bibinfo{journal}{Phys. Rev. A} \textbf{\bibinfo{volume}{78}},
  \bibinfo{pages}{042327} (\bibinfo{year}{2008}).

\bibitem[{\citenamefont{Audenaert et~al.}(2001)\citenamefont{Audenaert,
  Verstraete, and Moor}}]{audenaert01}
\bibinfo{author}{\bibfnamefont{K.}~\bibnamefont{Audenaert}},
  \bibinfo{author}{\bibfnamefont{F.}~\bibnamefont{Verstraete}},
  \bibnamefont{and} \bibinfo{author}{\bibfnamefont{B.~D.} \bibnamefont{Moor}},
  \bibinfo{journal}{Phys. Rev. A} \textbf{\bibinfo{volume}{64}},
  \bibinfo{pages}{052304} (\bibinfo{year}{2001}).

\bibitem[{\citenamefont{Rungta et~al.}(2001)\citenamefont{Rungta, Bu\v{z}ek,
  Caves, Hillery, and Milburn}}]{rungta01}
\bibinfo{author}{\bibfnamefont{P.}~\bibnamefont{Rungta}},
  \bibinfo{author}{\bibfnamefont{V.}~\bibnamefont{Bu\v{z}ek}},
  \bibinfo{author}{\bibfnamefont{C.~M.} \bibnamefont{Caves}},
  \bibinfo{author}{\bibfnamefont{M.}~\bibnamefont{Hillery}}, \bibnamefont{and}
  \bibinfo{author}{\bibfnamefont{G.~J.} \bibnamefont{Milburn}},
  \bibinfo{journal}{Phys. Rev. A} \textbf{\bibinfo{volume}{64}},
  \bibinfo{pages}{042315} (\bibinfo{year}{2001}).

\bibitem[{\citenamefont{Hill and Wootters}(1997)}]{hill97}
\bibinfo{author}{\bibfnamefont{S.}~\bibnamefont{Hill}} \bibnamefont{and}
  \bibinfo{author}{\bibfnamefont{W.~K.} \bibnamefont{Wootters}},
  \bibinfo{journal}{Phys. Rev. Lett.} \textbf{\bibinfo{volume}{78}},
  \bibinfo{pages}{5022} (\bibinfo{year}{1997}).

\bibitem[{\citenamefont{Pan et~al.}(1998)\citenamefont{Pan, Bouwmeester,
  Weinfurter, and Zeilinger}}]{pan98}
\bibinfo{author}{\bibfnamefont{J.~W.} \bibnamefont{Pan}},
  \bibinfo{author}{\bibfnamefont{D.}~\bibnamefont{Bouwmeester}},
  \bibinfo{author}{\bibfnamefont{H.}~\bibnamefont{Weinfurter}},
  \bibnamefont{and}
  \bibinfo{author}{\bibfnamefont{A.}~\bibnamefont{Zeilinger}},
  \bibinfo{journal}{Phys. Rev. Lett.} \textbf{\bibinfo{volume}{80}},
  \bibinfo{pages}{3891} (\bibinfo{year}{1998}).

\bibitem[{\citenamefont{Pan et~al.}(2001)\citenamefont{Pan, Daniell, Gasparoni,
  Weihs, and Zeilinger}}]{pan01}
\bibinfo{author}{\bibfnamefont{J.~W.} \bibnamefont{Pan}},
  \bibinfo{author}{\bibfnamefont{M.}~\bibnamefont{Daniell}},
  \bibinfo{author}{\bibfnamefont{S.}~\bibnamefont{Gasparoni}},
  \bibinfo{author}{\bibfnamefont{G.}~\bibnamefont{Weihs}}, \bibnamefont{and}
  \bibinfo{author}{\bibfnamefont{A.}~\bibnamefont{Zeilinger}},
  \bibinfo{journal}{Phys. Rev. Lett.} \textbf{\bibinfo{volume}{86}},
  \bibinfo{pages}{4435} (\bibinfo{year}{2001}).

\bibitem[{\citenamefont{Armstrong}(1988)}]{armstrong88}
\bibinfo{author}{\bibfnamefont{M.~A.} \bibnamefont{Armstrong}},
  \emph{\bibinfo{title}{Groups and Symmetry}} (\bibinfo{publisher}{Springer
  Verlag, , New York, NY}, \bibinfo{year}{1988}).

\end{thebibliography}

\end{document}